\def\url@leostyle{%
  \@ifundefined{selectfont}{\def\UrlFont{\sf}}{\def\UrlFont{\small\ttfamily}}}
\newcommand{\eat}[1]{}
\definecolor{light-gray}{gray}{0.9}
\newenvironment{packed_enum}{%
  \begin{enumerate}%
  }{\end{enumerate}}
\theoremstyle{definition}
\newtheorem{definition}{Definition}
\newtheorem{theorem}{Theorem}
\newtheorem{lemma}{Lemma}
\newcolumntype{L}[1]{>{\raggedright\let\newline\\\arraybackslash\hspace{0pt}}m{#1}}
\definecolor{codegreen}{rgb}{0,0.6,0}
\definecolor{codegray}{rgb}{0.5,0.5,0.5}
\definecolor{codepurple}{rgb}{0.58,0,0.82}
\definecolor{backcolour}{rgb}{0.95,0.95,0.92}
\lstdefinestyle{mystyle}{
    backgroundcolor=\color{backcolour},   
    commentstyle=\color{codegreen},
    keywordstyle=\color{magenta},
    numberstyle=\tiny\color{codegray},
    stringstyle=\color{codepurple},
    basicstyle=\footnotesize,
    breakatwhitespace=false,         
    breaklines=true,                 
    captionpos=b,                    
    keepspaces=true,                 
    numbers=left,                    
    numbersep=5pt,                  
    showspaces=false,                
    showstringspaces=false,
    showtabs=false,                  
    tabsize=2
}
\definecolor{verylightgray}{rgb}{.97,.97,.97}
\lstdefinelanguage{Solidity}{
	keywords=[1]{anonymous, assembly, assert, balance, break, call, callcode, case, catch, class, constant, continue, constructor, contract, debugger, default, delegatecall, delete, do, else, emit, event, experimental, export, external, false, finally, for, function, gas, if, implements, import, in, indexed, instanceof, interface, internal, is, length, library, log0, log1, log2, log3, log4, memory, modifier, new, payable, pragma, private, protected, public, pure, push, require, return, returns, revert, selfdestruct, send, solidity, storage, struct, suicide, super, switch, then, this, throw, transfer, true, try, typeof, using, value, view, while, with, addmod, ecrecover, keccak256, mulmod, ripemd160, sha256, sha3}, 
	keywordstyle=[1]\color{blue}\bfseries,
	keywords=[2]{address, bool, byte, bytes, bytes1, bytes2, bytes3, bytes4, bytes5, bytes6, bytes7, bytes8, bytes9, bytes10, bytes11, bytes12, bytes13, bytes14, bytes15, bytes16, bytes17, bytes18, bytes19, bytes20, bytes21, bytes22, bytes23, bytes24, bytes25, bytes26, bytes27, bytes28, bytes29, bytes30, bytes31, bytes32, enum, int, int8, int16, int24, int32, int40, int48, int56, int64, int72, int80, int88, int96, int104, int112, int120, int128, int136, int144, int152, int160, int168, int176, int184, int192, int200, int208, int216, int224, int232, int240, int248, int256, mapping, string, uint, uint8, uint16, uint24, uint32, uint40, uint48, uint56, uint64, uint72, uint80, uint88, uint96, uint104, uint112, uint120, uint128, uint136, uint144, uint152, uint160, uint168, uint176, uint184, uint192, uint200, uint208, uint216, uint224, uint232, uint240, uint248, uint256, var, void, ether, finney, szabo, wei, days, hours, minutes, seconds, weeks, years},	
	keywordstyle=[2]\color{teal}\bfseries,
	keywords=[3]{block, blockhash, coinbase, difficulty, gaslimit, number, timestamp, msg, data, gas, sender, sig, value, now, tx, gasprice, origin},	
	keywordstyle=[3]\color{violet}\bfseries,
	identifierstyle=\color{black},
	sensitive=false,
	comment=[l]{//},
	morecomment=[s]{/*}{*/},
	commentstyle=\color{gray}\ttfamily,
	stringstyle=\color{red}\ttfamily,
	morestring=[b]',
	morestring=[b]"
}
\begin{document}

\title{NF-Crowd: Nearly-free Blockchain-based Crowdsourcing}



\author{
  \IEEEauthorblockN{
    Chao~Li\IEEEauthorrefmark{1},
    Balaji~Palanisamy\IEEEauthorrefmark{2}, 
    Runhua~Xu\IEEEauthorrefmark{2},
    Jian~Wang\IEEEauthorrefmark{1} and
    Jiqiang~Liu\IEEEauthorrefmark{1}
  }
  \IEEEauthorblockA{
    \IEEEauthorrefmark{1}Beijing Key Laboratory of Security and Privacy in Intelligent Transportation, Beijing Jiaotong University, Beijing, China \\
    \IEEEauthorrefmark{2}School of Computing and Information, University of Pittsburgh, Pittsburgh, USA
  }
}

\IEEEtitleabstractindextext{%

\begin{abstract}

Advancements in distributed ledger technologies are rapidly driving the rise of decentralized crowdsourcing systems on top of open smart contract platforms like Ethereum. While decentralized blockchain-based crowdsourcing provides numerous benefits compared to centralized solutions, current implementations of decentralized crowdsourcing suffer from 
fundamental scalability limitations
by requiring all participants to pay a small transaction fee every time they interact with the blockchain. This increases the cost of using decentralized crowdsourcing solutions, resulting in a total payment that could be even higher than the price charged by centralized crowdsourcing platforms.
This paper proposes a novel suite of protocols called \texttt{NF-Crowd} that 
resolves the scalability issue
by reducing the lower bound of the total cost of a decentralized crowdsourcing project to $O(1)$. 
\texttt{NF-Crowd} is a highly reliable solution for scaling decentralized crowdsourcing.
We prove that as long as participants of a project powered by \texttt{NF-Crowd} are rational, the $O(1)$ lower bound of cost could be reached regardless of the scale of the crowd.
We also demonstrate that as long as at least one participant of a project powered by \texttt{NF-Crowd} is honest, the project cannot be aborted and the results are guaranteed to be correct.
We design \texttt{NF-Crowd} protocols for a representative type of project named crowdsourcing contest with open community review (CC-OCR).
We implement the protocols over the Ethereum official test network.
Our results demonstrate that \texttt{NF-Crowd} protocols can reduce the cost of running a CC-OCR project to less than \$2 regardless of the scale of the crowd, providing a significant cost benefit in adopting decentralized crowdsourcing solutions.

\end{abstract}


}

\maketitle

\IEEEdisplaynontitleabstractindextext

\IEEEpeerreviewmaketitle

\section{Introduction}

In the recent years, crowdsourcing has been gaining attention as a promising modern business model that enables individuals and organizations to receive services from a large group of people or crowd. Crowdsourcing services support a variety of tasks ranging from software development to logo designs~\cite{chittilappilly2016survey}.
For example, \texttt{LEGO Ideas} is attracting a lot of fan designers to enter prize contests by submitting original proposals for new \texttt{LEGO} Ideas sets. A few top-ranked proposals in the contest have resulted in successful commercialization~\cite{LEGO}.
Likewise, since 2005, UNIQLO has continuously held annual Global T-Shirt Design Competitions (UTGP) and its 14th UTGP in 2019 has  attracted over 18,000 entries from all over the world~\cite{UTGP}.
The increasing popularity of crowdsourcing solutions is also driving the rise of crowdsourcing intermediate platforms such as \texttt{Upwork}~\cite{Upwork}, \texttt{99designs}~\cite{99designs} and \texttt{designContest}~\cite{designContest} that connect business clients to designers including individual freelancers and design agencies.
For instance, via \texttt{designContest}, a startup company may set up a logo design contest with a description of its requirements and include a monetary reward.
After receiving a large number of entries from interested designers, the client may invite its target audience such as its followers on Twitter to vote for the favorite design and finally pick one or multiple winning entries based on the voting result.
Nearly all such crowdsourcing intermediate platforms make profits by charging fees from clients and designers.
\texttt{Upwork} reported in second quarter 2019 that its Gross Services Volume (GSV) grew 20\% year-over-year to \$518.8 million~\cite{UIR}.
On a \$500 crowdsourcing project in \texttt{Upwork}, the platform would charge \$100 as the service fee.
Similarly, \texttt{99designs} would charge \$75 from a \$500 project as the platform fee.
Such high service fees significantly increase the cost of running crowdsourcing projects online. Clients and designers have no choice but accept it as the mutually distrusted parties need a trustworthy intermediary for avoiding dishonest behaviors such as free-riding and false-reporting~\cite{zhang2012reputation}.

Recent advancements in blockchain
technology have led to the development of numerous open smart contract platforms including Ethereum~\cite{buterin2014next,wood2014ethereum}.
Ethereum has become a promising technology for decentralizing traditional centralized online services and as result, in the recent years, we have witnessed a rapid proliferation of numerous decentralized applications including decentralized crowdsourcing systems~\cite{calado2018tamper,duan2019aggregating,feng2019mcs,li2018crowdbc,lu2018zebralancer,wang2018blockchain,xu2019blockchain}. Such services
offer clients and designers an option to reduce the high intermediary fee required in centralized crowdsourcing systems.
However, blockchains provide tamper resistance properties only at a cost. For example, Ethereum charges each transaction a small fee based on the complexity. In Ethereum, tens of thousands of miners follow the Proof-of-Work (PoW) consensus protocol~\cite{nakamoto2008bitcoin} to compete for solving puzzles and each winner receives a monetary reward for packaging the recent transactions  (i.e., transferring fund, executing functions or creating smart contracts) into a new block appended to the end of the blockchain. Fees charged from transactions within a new block in the blockchain are paid to the competition winner who packages the block.
People trust that no one can tamper with the blockchain as the probability of a single miner to win in several consecutive competitions 
to be able to change the network consensus about the blockchain state is negligible.
As part of the competition reward, transaction fees help incentivize miners to invest more computation resources into the competitions, which in turn increases the difficulty of competitions and improves the overall safety of the blockchain.
Also, transaction fees help protect Ethereum against DDoS and Sybil attacks~\cite{douceur2002sybil} as the cost of creating $n$ transactions will have a cost of $O(n)$.
Despite their significance to the safety of Ethereum, transaction fees turn out to be an obstacle to existing decentralized crowdsourcing systems, 
especially when crowdsourcing projects are scaled up.
Consider that a client sets up a decentralized design contest in Ethereum, where each entry needs to be submitted by a designer via a transaction that costs a small fee, say \$1.
The total fee charged in this contest would be cheaper than a contest in \texttt{99designs} only when there are less than 75 entries.
In other words, intermediary fees are not eliminated by decentralizing crowdsourcing but are paid to a decentralized infrastructure instead of a centralized service provider for the same purpose of acquiring trust.
As a result, it is not surprising to see that sometimes there may be no considerable economic advantage in decentralizing crowdsourcing.
For example, if we take the price of ether~\footnote{\begin{scriptsize} The native cryptocurrency in Ethereum, denoted by \textXi. \end{scriptsize}} 
as its mean value during the first half of the year 2019 recorded in \textit{Etherscan}~\cite{etherscan}, 
a crowdsourced image tagging task completed via CrowdBC~\cite{li2018crowdbc} could very well spend over four times the price charged by the centralized Amazon Mechanical Turk~\cite{AMT}. Similarly, aggregating data from 1,000 providers via decentralized crowdsensing~\cite{duan2019aggregating} could cost up to \$170.

This paper aims at addressing the challenging question: 
\textit{how to provide a reliable solution that decouples the cost of decentralizing crowdsourcing from the scale of the crowd?}
Our research illustrates that the root cause of the high cost in existing decentralized crowdsourcing systems is the lack of  cost-efficient solutions to exploit decentralized trust.
Crowdsourcing projects usually involve steps that aggregate data (i.e., contest entries or sensed data) from the crowd of scale $n$ or perform calculations on aggregated data (i.e., determine winning entries based on votes).
Such steps become cost-intensive in smart contract platforms like Ethereum as they either charge accumulated small transaction fees (TYPE $n \times 1$) or a single large transaction fee (TYPE $1 \times n$), both resulting in $O(n)$ cost.
In this paper, we propose a novel suite of protocols called \texttt{NF-Crowd} that 
reliably resolves the scalability issue by
reducing the lower bound of the total cost of a decentralized crowdsourcing project to $O(1)$.
We prove that as long as participants of a project powered by \texttt{NF-Crowd} are rational, the $O(1)$ lower bound of cost could be reached regardless of the scale of the crowd.
We also demonstrate that as long as at least one participant of a project powered by \texttt{NF-Crowd} is honest, the project cannot be aborted and the results are guaranteed to be correct.
We design \texttt{NF-Crowd} protocols for a representative type of project named crowdsourcing contest with open community review (CC-OCR).
We implement the protocols over the Ethereum official test network.
Our results demonstrate that \texttt{NF-Crowd} protocols can reduce the cost of running a CC-OCR project to less than \$2 regardless of the scale of the crowd, providing a significant cost benefit in adopting decentralized crowdsourcing solutions.

The rest of this paper is organized as follows: 
We start by introducing preliminaries in Section~\ref{s2}.
In Section~\ref{s3}, we present a strawman protocol for CC-OCR projects and categorize the cost-intensive steps resulting in $O(n)$ cost.
Then, in Section~\ref{s4}, we propose the \texttt{NF-Crowd} protocol for CC-OCR projects that reduces the lower bound of cost to $O(1)$.
We implement and evaluate the \texttt{NF-Crowd} protocols over the Ethereum official test network in Section~\ref{s5}.
Finally, we discuss related work in Section~\ref{s6} and conclude in Section~\ref{s7}.

\section{Preliminaries}
\label{s2}
In this section, we discuss the preliminaries about smart contracts and introduce the key assumptions, key cryptographic tools and notations used in our work.
Though we discuss smart contracts in the context of Ethereum~\cite{wood2014ethereum}, our solutions are applicable to a wide range of other smart contract platforms as well.

\subsection{Account types}
\label{s2.1}
There are two types of accounts in Ethereum, namely External Owned Accounts (EOAs) and Contract Accounts (CAs).
To interact with the Ethereum blockchain, a user needs to create an EOA and control it via a pair of keys.
Specifically, the public key can generate a 20-byte address to uniquely identify the EOA and the private key can be used by the user to sign transactions or other types of messages.
Then, any user can create a smart contract by sending out a contract creation transaction from a controlled EOA.
The 20-byte address of the created smart contract 
becomes the unique identity of the contract account (CA).

\subsection{Transactions}
\label{s2.2}
The state of Ethereum blockchain can only be changed by the external world (i.e., EOAs) using transactions.
A transaction is a serialized binary message sent from an EOA (i.e., sender) that contains the following elements:
\begin{itemize}
  \item \textit{nonce}: a sequence number issued by the EOA (i.e., transaction creator) to prevent transaction replay;
  \item \textit{gas price}: the price of gas the EOA is willing to pay;
  \item \textit{gas limit}: the maximum amount of gas the EOA can afford;
  \item \textit{recipient}: the recipient account address;
  \item \textit{value}: the amount of ether to send to the recipient;
  \item \textit{data}: the binary data payload;
  \item \textit{vrs}: the ECDSA digital signature of the EOA.
\end{itemize}

Depending on the value at \textit{recipient} (i.e., EOA or CA or 0x0), transactions can be classified into three categories.

\noindent \textbf{Fund transfer transaction}:
A transaction with an EOA as \textit{recipient} and a non-empty \textit{value} is a fund transfer transaction, which is used to transfer an amount of ether from the sender EOA to the \textit{recipient} EOA.
On the other hand, \textit{data} carried by a transaction is usually ignored by Ethereum clients and wallets that help users control their EOAs.

\noindent \textbf{Function invocation transaction}: 
When a transaction involves a CA as \textit{recipient} as well as a non-empty \textit{data}, it is usually a function invocation transaction for calling a function within an existing smart contract.
Specifically, when the transaction is used to call a function with arguments, such as:

\begin{minipage}{\linewidth}
\begin{lstlisting}[
linewidth=8.5cm,
language=Solidity,
basicstyle=\footnotesize,
label={a2},
frame=none,
numbers=none,
numbersep=5pt,
breaklines=true,
breakatwhitespace=true
]
  function f(uint _arg1, uint _arg2) public {}
\end{lstlisting}
\end{minipage}

\noindent the \textit{data} payload of the transaction would be in the form of
$$encode(`f(uint256,uint256)')|encode(\_arg1)|encode(\_arg2)$$
namely concatenation of the encoded string $``f(uint256,uint256)"$, also called as \textit{function selector}, and the encoded values for all function arguments.
When a function invocation transaction additionally carries a non-empty \textit{value} and the invoked function is marked \textit{payable} such as:

\begin{minipage}{\linewidth}
\begin{lstlisting}[
linewidth=8.5cm,
language=Solidity,
basicstyle=\footnotesize,
label={a2},
frame=none,
numbers=none,
numbersep=5pt,
breaklines=true,
breakatwhitespace=true
]
function withdraw(uint _amount) public payable {}
\end{lstlisting}
\end{minipage}

\noindent The amount of ether indicated by \textit{value} would be transferred from the sender EOA to the \textit{recipient} CA.

\noindent \textbf{Contract creation transaction}: 
In Ethereum, there is a special type of transaction for creating new smart contracts.
Such a transaction, usually referred to as a contract creation transaction, carries a special \textit{recipient} address 0x0, an empty \textit{value} and a non-empty \textit{data} payload.
A smart contract (or contract) in Ethereum is a piece of program created using a high-level contract-oriented programming language such as \textit{Solidity}~\cite{Solidity2017}.
After compiling into a low-level bytecode language called Ethereum Virtual Machine (EVM) code, the created contract is filled into a contract creation transaction as the \textit{data} payload.

A user, after filling \textit{recipient}, \textit{value} and \textit{data} of a transaction, will then input \textit{nonce}, \textit{gas price} and \textit{gas limit} and finally sign the transaction with the private key of the sender EOA to get signature \textit{vrs}.
After that, a complete transaction is created.
To make the transaction get executed to change the state of the Ethereum blockchain, the transaction should be broadcast to the entire Ethereum network formed by tens of thousands of miner nodes.
Following the Proof-of-Work (PoW) consensus protocol~\cite{nakamoto2008bitcoin}, miners in Ethereum competitively solve a blockchain puzzle and the winner packages the received transactions into a block and appends the new block to the end of Ethereum blockchain.
From then on, it is hard to tamper with the blockchain state updated by the transaction (i.e., transferred fund, executed function or created contract).
Thus, transactions and smart contracts in Ethereum are executed transparently in a decentralized manner and the results are deterministic.


\subsection{Transaction fees}
\label{s3.3}
In order to either deploy a new contract or call a deployed contract in Ethereum, one needs to spend Gas, or transaction fees.
Based on the complexity of the contract or that of the called function, an amount of ether needs to be spent to purchase an amount of Gas as a transaction fee, which is then paid to the winning miner. 
The Gas system is important for Ethereum as it helps to incentivize miners to stay honest, to nullify denial-of-service attacks and to encourage efficiency in smart contract programming.
On the other hand, the Gas system requires protocols, especially the multi-party ones, to be designed with higher scalability in Ethereum. This is due to the fact that even a single-round multi-party protocol could spend a lot of money to run in case of a large number of participants. 

\subsection{Off-chain channels}
\label{s3.4}
In Ethereum, nodes forming the underlying P2P network can send messages to each other via off-chain channels established through the Whisper protocol~\cite{Whisper2017}.
By default, messages are broadcast to the entire P2P network.
A node can set up a filter to only accept messages marked with a specific 4-byte topic.
Besides, a node can locally generate a pair of asymmetric Whisper keys and reveal the public Whisper key to the blockchain, which allows other nodes to privately communicate with it.

\subsection{Key assumptions}
\label{assumptions}
\vspace{-1mm}
We make the following key assumptions in this paper:
\begin{itemize}[leftmargin=*]
\item We assume that the underlying blockchain system satisfies \textit{liveness}, \textit{consistency} and \textit{immutability} properties~\cite{garay2017bitcoin}. 
\item We also make the synchrony assumption that ensures that there exists a known upper bound on the delay of messages.
\item Finally, we assume that at least one reviewer is honest. 
\end{itemize}

\begin{table}
\caption{Summary of notations.}
\vspace{-3.5mm}    
\begin{center}
\begin{tabular}{|c|p{6.5cm}|}
\hline
\textbf{notation} & \textbf{description} \\
\hline
$C$ & a client who sets up a design contest \\
$D$ & a designer who wants to win a reward\\
$R$ & a reviewer who casts vote to entries\\
$S$ & a smart contract \\
$S.f()$ & function $f()$ within contract $S$ \\
$\rightrightarrows$ & broadcast information via off-chain channels\\
$\dashrightarrow$ & transmit infomation via \textit{private} off-chain channels \\
$\Rightarrow$ & invoke a function within a smart contract\\
$\textit{addr}\;(*)$ & an address of an EOA or a CA\\
$\textit{keccak}\;(*)$ & a keccak hash value\\
$\textit{ipfs}\;(*)$ & a content-addressed IPFS link\\
\hline
\end{tabular}
\end{center}  
\vspace{-4.5mm}
\label{t1}
\end{table}

\subsection{Cryptographic tools}
\label{s3.5}
The design of \texttt{NF-Crowd} protocols employs several key cryptographic tools:
(1) we use a standard notion of Merkle trees and denote the function of getting the Merkle root of a tree as $root \gets merkleRoot(elements)$.
(2) we use the Keccak 256-bit hash function supported by Ethereum and it is denoted as $keccak(*)$.
(3) we use the ECDSA signature supported by Ethereum. 
Specifically, a EOA (i.e., \textit{signer}) can sign any message via \textit{JavaScript} API and get signature $vrs \gets sig(keccak(message))$.
Later, other EOAs or CAs can recover the address of the \textit{signer} EOA (i.e. $addr(signer)$) via \textit{JavaScript} API or \textit{Solidity} native function and get
$addr(signer) \gets vf(keccak(message),vrs)$.
(4) we use a distributed file system named InterPlanetary File System (IPFS)~\cite{benet2014ipfs} to cost-efficiently store data to Ethereum blockchain.
Specifically, instead of paying a large transaction fee to store a large file on the chain, we could just store the immutable, permanent IPFS link (hash) of the file obtained via $link \gets \textit{ipfs}\;(\textit{file})$ by paying a much smaller fee, which still timestamps and secures the content of file.

In the next two sections, we start by presenting a strawman protocol for a CC-OCR crowdsourcing scenario. We then introduce the idea of \texttt{NF-Crowd} for reducing cost to $O(1)$. 
Before discussing the proposed protocols, we summarize the notations that will be used in the rest of this paper in Table~\ref{t1}.

\section{CC-OCR: A strawman protocol}
\label{s3}
In this section, we first describe the crowdsourcing contest with open community review (CC-OCR) as a three-phase process. We then propose a strawman protocol that decentralizes CC-OCR projects over Ethereum. We finally identify and categorize the cost-intensive steps within the strawman protocol that results in its $O(n)$ total cost.

\subsection{CC-OCR as a three-phase process}
We describe a CC-OCR project as a three-phase process similar to the procedure of \texttt{LEGO Ideas}~\cite{LEGO}:

\begin{itemize}[leftmargin=*]

\item \textit{CC-OCR.initial}: 
The client ($C$) initiates a design contest with a detailed description of its requirements as well as a reward. 
\item \textit{CC-OCR.entry}:
The interested designers ($D$s) within the community submit their entries to the contest.
\item \textit{CC-OCR.review}:
The client ($C$) invites selected members from the community as reviewers ($R$s) to cast votes to the entries (one vote per $R$) and pick one or multiple winning entries based on the voting result.
Here, the known identities of reviewers prevent them from performing a Sybil attack and hence, the one vote per reviewer could be guaranteed.

\end{itemize}

\begin{figure}
\centering
{
   
    \includegraphics[width=9cm,height=3.5cm]{./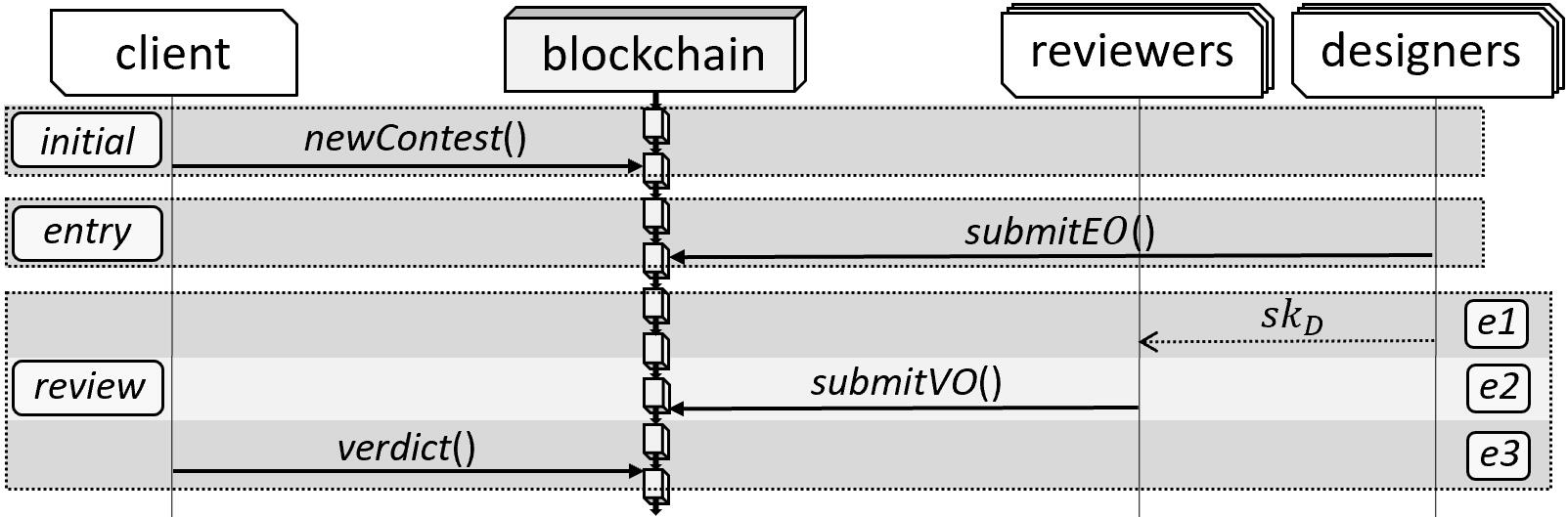}
}
\caption {Strawman CC-OCR protocol sketch. Solid lines denote on-chain transactions. Dotted lines denote off-chain communication.}
\label{protocol_sketch_01} 
\end{figure}

\subsection{The strawman CC-OCR protocol}

We first propose a strawman protocol to decentralize CC-OCR projects over Ethereum.
We sketch the protocol in Fig.~\ref{protocol_sketch_01} and present the formal description in Fig.~\ref{protocol_detail_1}.
The regulations of the strawman protocol are programmed as an agency smart contract $S_{\text{ag}}$, through which a client can hold a contest to receive entries and votes from a community.

\noindent \textbf{\textit{CC-OCR.initial}}:
Client $C$ initiates a contest by sending out a function invocation transaction from an owned EOA to call the function $newContest()$ at CA $addr(S_{\text{ag}})$.
The transaction would carry two arguments at its $data$ payload: 
(1) a list of $deadlines$ indicating ending times of phases or epochs in the contest;
(2) an IPFS link guiding designers to a file describing the requirements of this contest.
The transaction would also include a non-empty $value$ to transfer an amount of ether to CA $addr(S_{\text{ag}})$ to reward winning entries.
From then on, the arguments are permanently recorded in the blockchain and this new contest can be uniquely identified via $addr(C)$ the client address and $cn$ the number of contests that have been created by $C$ at CA $addr(S_{\text{ag}})$.

\begin{figure}
\begin{minipage}{0.5\textwidth}
\begin{mdframed}[innerleftmargin=8pt]

\noindent \textbf{CC-OCR.initial:} 
\begin{packed_enum}[leftmargin=*]
  \item Client creates a contest: $C \Rightarrow S_{\text{ag}}.newContest$ 
    $(deadlines,  
    \textit{ipfs}\;(description)$, \textXi $reward$).
\end{packed_enum}

\noindent \textbf{CC-OCR.entry:} 
\begin{packed_enum}[leftmargin=*]
  \setcounter{enumi}{1}
  \item Designers $Ds$ submit entry objects $(EOs)$:
  \begin{packed_enum}[leftmargin=*]
    \item Each $D$ creates  
    $EO \coloneqq \textit{ipfs}\;(E(pk_D,proposal))$. 
    \item $D \Rightarrow S_{\text{ag}}$.
      $submitEO([addr(C),cn],EO)$. 
  \end{packed_enum}
\end{packed_enum}

\noindent \textbf{CC-OCR.review:} 
\begin{packed_enum}[leftmargin=*]

  \setcounter{enumi}{2}
  \item Each $D \rightrightarrows$:
    $sk_D$.

  \item Reviewers $Rs$ submit vote objects $(VOs)$:
  \begin{packed_enum}[leftmargin=*]
    \item Each $R$ creates 
      $VO \coloneqq addr(D)$.
    \item $R \Rightarrow S_{\text{ag}}$.
      $submitVO([addr(C),cn],VO)$.
  \end{packed_enum}

  \item Client reveals final verdict: 
  $C \Rightarrow S_{\text{ag}}$. $verdict(cn)$.

\end{packed_enum}

\end{mdframed}
\end{minipage}

\captionof{figure}{
Strawman CC-OCR protocol
}
\label{protocol_detail_1}
\end{figure}

\noindent \textbf{\textit{CC-OCR.entry}}: 
After reading the descriptions, interested designers from the community could submit entries to the contest.
The protocol requires a designer $D$ to create an entry object \textit{EO} by first generating a one-time asymmetric key pair, then encrypting the detailed $proposal$ with the public key $pk_D$ and finally computing the IPFS link of encrypted $proposal$.
Such an \textit{EO} could make $proposal$ confidential before the review phase and also reduce the size of data stored on the chain.
After that, the designer could call $submitEO()$ with a transaction carrying arguments \textit{EO} as well as $[addr(C),cn]$ that specifies the participating contest.
It is worth noting that any entry submmitted after the $deadline$ specified by $C$ for the entry phase would be rejected because $submitEntry()$ includes a time checker function:

\begin{minipage}{\linewidth}
\begin{lstlisting}[
linewidth=8cm,
language=Solidity,
basicstyle=\footnotesize,
label={a2},
frame=none,
numbers=none,
numbersep=5pt,
breaklines=true,
breakatwhitespace=true
]
               require(now < deadline);
\end{lstlisting}
\end{minipage}

 \noindent which requires miners to only accept the transaction if the timestamp is smaller than $deadline$ input via $newContest()$.

\noindent \textbf{\textit{CC-OCR.review}}: 
The protocol divides the review phase into three epochs.
The first epoch is for designers to reveal private keys $sk_D$ so that reviewers can read their entries.
Then, during the second epoch, each reviewer $R$ creates a vote object \textit{VO} which in the strawman protocol is simply the address of the designer that $R$ wants to vote for.
The created \textit{VO} would be submitted to CA $addr(S_{\text{ag}})$ via $submitVO()$.
Finally, client $C$ shall call $verdict()$ during the third epoch and the function would traverse the votes received by all the entries and pick one or multiple entries obtaining the greatest number of votes as winners, who could later withdraw the \textXi $reward$ deposited by $C$ via $newContest()$.

\subsection{The cost-intensive steps}

Despite the simplicity, the strawman CC-OCR protocol involves two key representative types of cost-intensive steps, defined as TYPE $n \times 1$ and TYPE $1 \times n$, respectively.

\noindent \begin{definition}[\textsc{TYPE $n \times 1$}]
\textit{A protocol step is said to be in TYPE $n \times 1$ if the number of transactions created at this step increases along with the scale of the crowd.}
\end{definition}

\noindent \begin{definition}[\textsc{TYPE $1 \times n$}]
\textit{A protocol step is said to be in TYPE $1 \times n$ if the cost of a single transaction created at this step increases along with the scale of the crowd.}
\end{definition}

   

\noindent 
It is easy to see that both step 2 and 4 of the strawman protocol in Fig.~\ref{protocol_detail_1} are in TYPE $n \times 1$ while step 5 is in TYPE $1 \times n$.
At step 2 (4), data is aggregated from the crowd and each designer (reviewer) needs to submit data via a function invocation transaction $submitEO()$ ($submitVO()$) that spends a certain amount of ether and hence, the accumulated cost at this step increases along with the scale of the crowd (i.e., number of transactions).
At step 5, function $verdict()$ contains a \textit{for} loop to iterate over all entries to pick the winners and each iteration spends a certain amount of ether and therefore, the accumulated cost at this step also increases along with the scale of the crowd (i.e., number of iterations).

\noindent \begin{theorem}
\textit{A TYPE $1 \times n$ step is associated with at least one TYPE $n \times 1$ step, but not vice versa.}
\end{theorem}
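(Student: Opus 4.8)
The plan is to establish the statement as two independent sub-claims: the forward implication, that every TYPE $1 \times n$ step forces the existence of at least one TYPE $n \times 1$ step on which it depends, and the separation, that the reverse implication fails, which I would settle by an explicit counterexample. Throughout I read ``associated with'' as a data-dependency relation: a TYPE $1 \times n$ step $s$ is associated with a TYPE $n \times 1$ step $s'$ when the single transaction executed at $s$ reads or operates on the $\Theta(n)$ blockchain-state entries produced by the $\Theta(n)$ transactions of $s'$.

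For the forward direction, first I would unpack the definition: at a TYPE $1 \times n$ step there is a single transaction $\tau$ whose gas cost grows with the crowd scale $n$. Appealing to the EVM gas model recalled in Section~\ref{s3.3}, the cost of $\tau$ is proportional to the operations (storage reads, comparisons, loop iterations) it performs; hence for this cost to be unbounded in $n$, $\tau$ must execute $\Omega(n)$ operations touching $\Omega(n)$ distinct data items. In a crowdsourcing protocol the only quantity that scales with $n$ is the crowd's own contributions, so these items are the per-participant submissions (entries, votes, etc.). The next step is to argue that each such item reaches the state read by $\tau$ only through a transaction authorized by its contributor: by the account and transaction model of Sections~\ref{s2.1} and~\ref{s2.2}, state is mutated solely by EOA-signed transactions, and attributing a contribution to a participant (which the integrity of crowdsourcing requires, e.g.\ one vote per reviewer) forces that participant's own signature. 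Consequently the $\Omega(n)$ items are deposited by $\Omega(n)$ distinct transactions, a step whose transaction count grows with $n$ and is therefore TYPE $n \times 1$; this is precisely the step with which $\tau$ is associated.

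For the separation I would exhibit a single TYPE $n \times 1$ step carrying no associated TYPE $1 \times n$ step. Take a collection step in which $n$ participants each deposit a contribution through their own transaction, so the transaction count scales with $n$ (TYPE $n \times 1$), but the aggregate is maintained incrementally at $O(1)$ cost per transaction --- the canonical example being $n$ fund-transfer transactions that each add to a contract balance, or contributions that are later consumed one participant at a time. Because no single transaction ever iterates over all $n$ contributions, no transaction has cost scaling with $n$, so there is no TYPE $1 \times n$ step at all, which refutes the converse. The strawman protocol illustrates both halves: step~5 (\emph{verdict}), which is TYPE $1 \times n$, reads the votes and entries aggregated at the TYPE $n \times 1$ steps~2 and~4, whereas a purely collect-and-store step need not entail any on-chain loop.

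The step I expect to be the main obstacle is closing the forward direction against the possibility that the $\Omega(n)$ items enter $\tau$ through its own \emph{data} payload rather than through previously written state --- that is, a single party aggregating all contributions off-chain and submitting them in one transaction, which would make $\tau$ expensive with no $n$-transaction predecessor. This is exactly the loophole that the \texttt{NF-Crowd} construction later exploits, so it cannot be dismissed casually; I would close it by pinning down the model under which the theorem is claimed, namely that each contribution must be individually authorized by its contributor (so even off-chain aggregation still presupposes $n$ contributor-signed messages), or, more simply, by restricting the statement to the direct-submission setting of the strawman protocol where every participant interacts with the chain on its own behalf. Making this modeling boundary explicit is what turns the otherwise intuitive provenance argument into a rigorous proof.
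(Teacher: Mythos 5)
Your proposal is correct, and its skeleton --- reading ``associated with'' as data dependency, a provenance argument for the forward direction, and a counterexample for the converse --- is exactly what the theorem needs; but it is considerably more general than what the paper actually writes. The paper gives no abstract proof at all: its entire justification is an observation on the strawman protocol, namely that the votes aggregated at step 4 (TYPE $n \times 1$) exist precisely to be the inputs of the verdict at step 5 (TYPE $1 \times n$), which exhibits the forward association in one instance, and that the entries aggregated at step 2 serve only as bulletin-board posts consumed by no later computation, which is its witness for ``not vice versa.'' Your forward direction (gas cost growing with $n$ forces $\Omega(n)$ data items, which in a crowdsourcing protocol must be per-participant contributions, each individually authorized, hence deposited by $\Omega(n)$ transactions) is a genuine argument where the paper has only an example. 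Moreover, the loophole you single out --- the $\Omega(n)$ items arriving in the transaction's own payload rather than from prior on-chain state --- is exactly the mechanism \texttt{NF-Crowd} later exploits (a single client transaction carrying $n$ contributor-signed objects, cf.\ $\textit{reloadChunkVO}()$), and the paper never reconciles this with Theorem~1; it implicitly treats the association as surviving offloading, which is how Theorem~1 is invoked in Section~4.3 to argue that offloaded data must be reloaded before on-chain computation. Your proposed repair --- count contributor-signed messages rather than on-chain transactions, or restrict the claim to the direct-submission setting --- supplies precisely the modeling assumption the paper leaves unstated.

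One caveat: you describe the strawman verdict (step 5) as reading data aggregated at ``steps 2 and 4,'' but the paper asserts the opposite for step 2 --- the \textit{EO}s are \emph{not} associated with any future step, since the verdict consumes only the votes (which is why \texttt{NF-Crowd}'s $\textit{onchainVerdict}()$ requires reloading \textit{VO}s but never \textit{EO}s) --- and that non-association is the paper's own separating example for the ``not vice versa'' half. Your generic collect-and-store counterexample is valid on its own, so the proof stands, but if you illustrate the separation with the strawman protocol you should use step 2 as the paper does, and you cannot simultaneously claim that step 5 depends on it.
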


\noindent 
We can see that even though steps 2 and 4 are in TYPE $n \times 1$, there is a difference between them.
Specifically, step 2 leverages the blockchain as a bulletin board to post entries that meet the requirements of the contest (e.g., before $deadline$) in a deterministic and trustworthy way and hence, data received at step 2 (i.e., \textit{EO}) is not associated with any future step in the protocol.
In contrast, data aggregated at step 4 (i.e., \textit{VO}) is associated with step 5 as the votes are collected for the purpose of being the inputs of a function that outputs poll winners.

Next, we introduce the NF-crowd CC-OCR protocol that leverages cost-cutting strategies to convert the two types of cost-intensive steps into cost-efficient steps and reduce the lower bound of the total cost of a decentralized CC-OCR project to $O(1)$.

\section{CC-OCR: An NF-Crowd protocol}
\label{s4}
We begin by presenting the high-level ideas of the NF-Crowd CC-OCR protocol.
We then present the strategies of cutting costs at TYPE $n \times 1$ steps and TYPE $1 \times n$ steps in detail.
We finally analyze the total cost and safety of the proposed protocol. 
We present the NF-Crowd CC-OCR protocol sketch in Fig.~\ref{sketch_02} and the formal description in Fig.~\ref{protocol_detail_2}.

\subsection{High-level ideas}
The NF-Crowd CC-OCR protocol design includes the following novel mechanisms:

\noindent \textbf{\textit{Enforceable off-chain execution}}:
Similar to the strawman CC-OCR protocol, most existing decentralized crowdsourcing systems are designed to be executed in an on-chain mode, where both data storage and computation over stored data are performed on the blockchain to employ the decentralized trust in a simple but expensive way~\cite{duan2019aggregating,li2018crowdbc,wu2019bptm}.
We believe that decentralized trust could be employed in a more cost-effective way.
Specifically, we design the NF-Crowd CC-OCR protocol to be executed in an off-chain mode by default, involving no cost-intensive on-chain operations resulting in $O(n)$ cost when all participants are honest.
In case if any dishonest participant performs any fraudulent behavior that violates the protocol and aborts the off-chain execution,
any honest participant reserves the ability to switch the off-chain mode to an on-chain mode similar to the strawman CC-OCR protocol so that the execution of the protocol could always get enforced.

\noindent \textbf{\textit{Punishable protocol violation}}:
In order to incentivize participants to stay honest so that the protocol can end successfully in its off-chain mode, we require each participant to lock an amount of ether in smart contracts as a security deposit to penalize potential misbehaviors that violate the protocol by confiscating the security deposit paid by the corresponding dishonest participant.
Specifically, anyone who wishes to join the community to engage in a contest needs to first send out a function invocation transaction to call a function called $joinCommunity()$ at CA $addr(S_{\text{ag}})$. The transaction carries no arguments but is with a non-empty $value$ to transfer an amount of ether to the CA as \textXi $deposit$.
Likewise, during the initial phase, besides \textXi $reward$, the $newContest()$ function also charges an amount of ether from client $C$ as \textXi $deposit$.


\subsection{TYPE $n \times 1$ cost-cutting strategy}

The proposed strategy for cutting the cost of TYPE $n \times 1$ steps consists of four components, illustrated as step 2.1 (4.1), step 2.2 (4.2), step 2.3 (4.3) and step 6 in Fig.~\ref{protocol_detail_2}, respectively.

\noindent \textbf{\textit{Off-chain aggregation}}:
The first component converts TYPE $n \times 1$ on-chain aggregation into TYPE $1 \times n$ off-chain aggregation. 
The on-chain aggregation employed in the strawman protocol requires participants (i.e., designers or reviewers) to separately submit data to the blockchain, resulting in $n$ transactions.
In the NF-Crowd protocol, participants transmit data to a single uploader (i.e., client $C$), who then submits aggregated data via a single transaction to the blockchain.
In Fig.~\ref{protocol_detail_2}, at step 2.1 (4.1), each \textit{EO} (\textit{VO}) additionally carries a signature $vrs_D$ ($vrs_R$) and is then transmitted to client $C$ via off-chain channels.

\begin{figure}
\centering
{
   
    \includegraphics[width=9cm,height=6.5cm]{./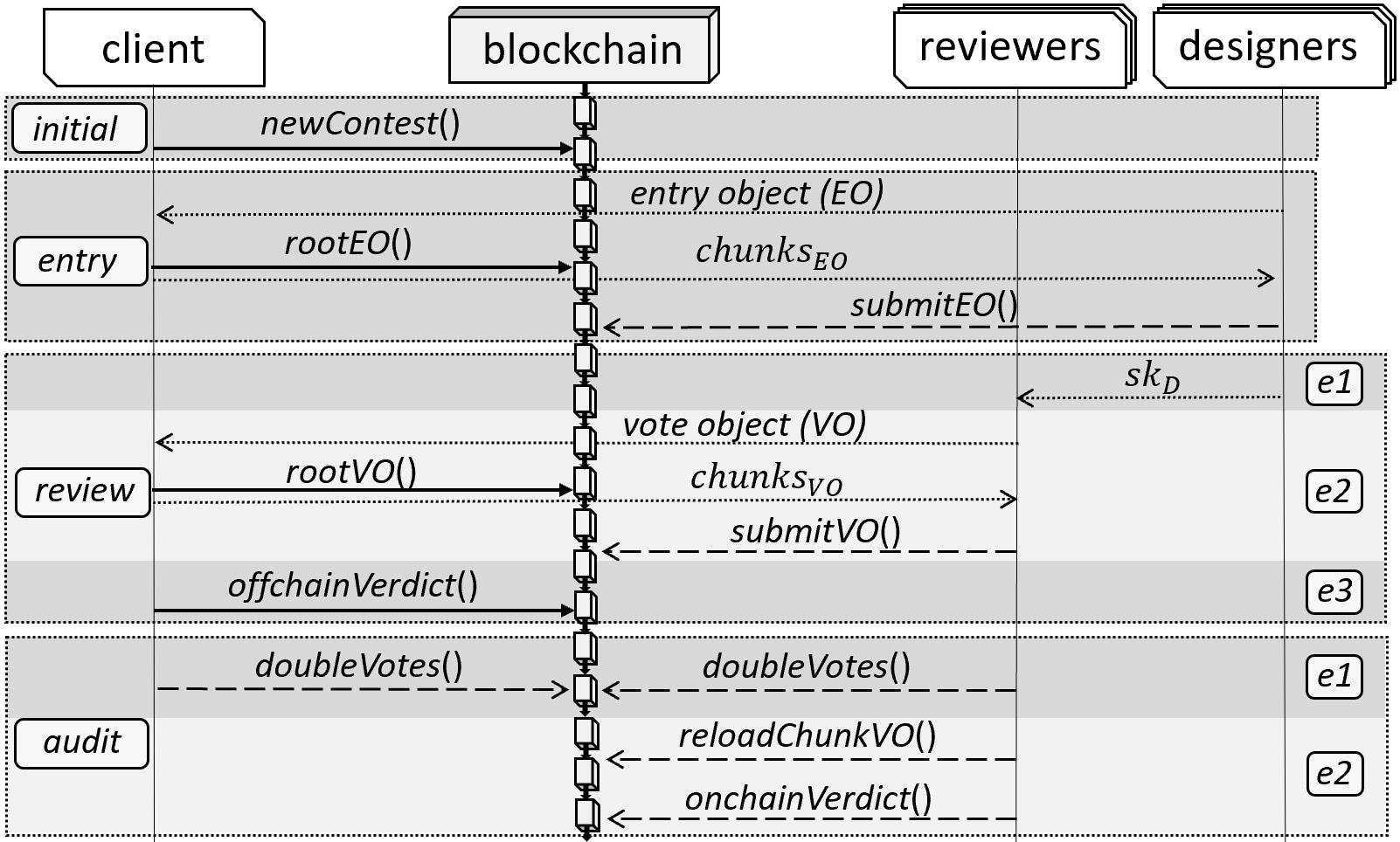}
}
\caption {NF-Crowd CC-OCR protocol sketch. Solid lines denote on-chain transactions. Dotted lines denote off-chain communication. Dashed lines denote available on-chain transactions as countermeasures.}
\vspace{-3mm}
\label{sketch_02} 
\end{figure}

\begin{figure}
\begin{minipage}{0.51\textwidth}
\begin{mdframed}[innerleftmargin=2pt]

\noindent \textbf{CC-OCR.initial:} 
\begin{packed_enum}[leftmargin=*]
  \item Client creates a contest: $C \Rightarrow S_{\text{ag}}.newContest$ 
    $(deadlines, $ 
    $\textit{ipfs}\;(description)$, [\textXi $deposit$, \textXi $reward$]).
\end{packed_enum}

\noindent \textbf{CC-OCR.entry:} 
\begin{packed_enum}[leftmargin=*]
  \setcounter{enumi}{1}
  \item Designers $Ds$ submit entry objects $(EOs)$:
  \begin{packed_enum}[leftmargin=*]
    \item $D$ creates 
      $EO \coloneqq [vrs_D,\textit{ipfs}\;(E(pk_D,proposal))]$, \\
      where $vrs_D \gets sig(keccak(\textit{ipfs}\;(E(pk_D,proposal))))$. \\
      Then, $D \dashrightarrow C$: 
      $[EO]$.
    \item $C$ organizes received $EOs$ as a Merkle tree and computes $root_{EO} \gets merkleRoot(chunks_{EO})$:
    \begin{packed_enum}[leftmargin=*]
      \item $C \Rightarrow S_{\text{ag}}$. 
        $rootEO(cn,root_{EO})$.
      \item  $C \rightrightarrows$:
        $\textit{ipfs}\;(chunks_{EO})$.
    \end{packed_enum}
    \item $<Countermeasure\ against\ intentional\ exclusion>$ \\
      Designer $D \Rightarrow S_{\text{ag}}$.$submitEO([addr(C),cn],EO)$. \\
      The transaction fee shall be shared by $C$ and $D$.
  \end{packed_enum}
\end{packed_enum}

\noindent \textbf{CC-OCR.review:} 
\begin{packed_enum}[leftmargin=*]

  \setcounter{enumi}{2}
  \item Each $D \rightrightarrows$:
    $sk_D$.

  \item Reviewers $Rs$ submit vote objects $(\textit{VOs})$:
  \begin{packed_enum}[leftmargin=*]
    \item $R$ creates 
      $VO \coloneqq [vrs_R,addr(D)]$, where $vrs_R \gets sig(keccak(addr(D)))$. 
      Then, $R \dashrightarrow C$: 
      $[\textit{VO}]$.
    \item $C$ organizes received \textit{VOs} as a Merkle tree and computes $root_{\textit{VO}} \gets merkleRoot(chunks_{\textit{VO}})$:
    \begin{packed_enum}[leftmargin=*]
      \item $C \Rightarrow S_{\text{ag}}$. 
        $\textit{rootVO}(cn,root_{\textit{VO}})$.
      \item  $C \rightrightarrows$:
        $\textit{ipfs}\;(chunks_{\textit{VO}})$.
    \end{packed_enum}
    \item $<Countermeasure\ against\ intentional\ exclusion>$ \\
      Reviewer $R \Rightarrow S_{\text{ag}}$.$\textit{submitVO}([addr(C),cn],\textit{VO})$. \\
      The transaction fee shall be shared by $C$ and $R$.
  \end{packed_enum}

  \item $C \Rightarrow S_{\text{ag}}$.$\textit{offChainVerdict}(cn, winners)$.

\end{packed_enum}

\noindent \textbf{CC-OCR.audit:} 
\begin{packed_enum}[leftmargin=*]
  \setcounter{enumi}{5}
  \item $<Countermeasure\ against\ double\ votes>$ \\
    $C$ (or $R$) $\Rightarrow S_{\text{ag}}$.$\textit{doubleVotes}(addr(C),cn,\textit{proof},chunk_{\textit{VO}},i)$.
  \item $<Countermeasure\ against\ incorrect$ \textit{off-chain} $computation>$ \\
    $R \Rightarrow S_{\text{ag}}$.$\textit{reloadChunkVO}(addr(C),cn,\textit{proof},chunk_{\textit{VO}})$.
    $R \Rightarrow S_{\text{ag}}$.$\textit{onchainVerdict}(addr(C),cn)$.
\end{packed_enum}

\end{mdframed}
\end{minipage}

\captionof{figure}{
NF-Crowd CC-OCR protocol
}
\vspace{-3mm}
\label{protocol_detail_2}
\end{figure}

\noindent \textbf{\textit{On-chain Merkle root}}:
The second component leverages a Merkle tree to reduce the cost of TYPE $1 \times n$ off-chain aggregation from $O(n)$ to $O(1)$.
Instead of uploading raw aggregated data to the blockchain, client $C$ here can group the aggregated data as a number of chunks\footnote{\begin{scriptsize} Aggregated data is grouped into chunks for reducing the cost of reloading aggregated data onto the chain. We discuss more details on this in Section~\ref{s5}. \end{scriptsize}}, create a Merkle tree that takes the chunks as elements, make chunks public via off-chain channels and upload the Merkle root to the blockchain.
This strategy has the following properties:

\begin{itemize}[leftmargin=*]
\item \textit{O(1) cost}:
The only data that needs to be uploaded is the Merkle root and the cost of uploading the 32-byte root is constant.
\item \textit{Transparency}: 
All chunks are available to all participants via off-chain channels.
\item \textit{Tamper resistance}: 
The integrity of each chunk could be verified via the on-chain Merkle root and the integrity of each \textit{EO (VO)} could be verified via $vrs_D$ ($vrs_R$).
\item \textit{Traceability}: 
Signatures $vrs_D$ ($vrs_R$) could be used to obtain the addresses of the designers (reviewers).
\end{itemize}

Thus, as long as all the participants are honest, this proposed strategy can achieve $O(1)$ cost with the same level of security as in the on-chain aggregation strategy.
At step 2.2 (4.2) in Fig.~\ref{protocol_detail_2}, client $C$ creates a Merkle tree for received $\textit{EOs}$ ($\textit{VOs}$), upload Merkle root via function $\textit{rootEO}()$ ($\textit{rootVO}()$) and reveal the IPFS link of $chunks_{EO}$ ($chunks_{\textit{VO}}$) via off-chain channels.

\noindent \textbf{\textit{Countermeasure against intentional exclusion}}:
In on-chain aggregation, data is uploaded by participants themselves and it is hard to keep any participant from getting involved in a contest.
In off-chain aggregation, for the purpose of cutting costs, data from all participants is uploaded together by a single uploader, hence a dishonest uploader here has the ability to intentionally exclude data belonging to certain participants from the Merkle tree.
As a countermeasure strategy, when a designer (reviewer) fails to verify her data via the on-chain Merkle root, 
the designer (reviewer) could re-upload the data onto the chain by herself as in the on-chain aggregation before the end of entry (review.e2) phase.
The additional cost is shared between the participant and the uploader.
In other words, we design the off-chain aggregation strategy to be backed up by the on-chain aggregation strategy and hence a dishonest uploader can hardly block any participant.
Meanwhile, both the uploader and participants are incentivized to honestly follow the off-chain strategy to avoid additional charges in the on-chain strategy.
This component is illustrated at step 2.3 (4.3) in Fig.~\ref{protocol_detail_2}, where function $submitEO()$ ($\textit{submitVO}()$) would refund half of the transaction fee to the designer (reviewer) from \textXi $deposit$ paid by client $C$.

\noindent \textbf{\textit{Countermeasure against double votes}}:
It is possible that a dishonest participant uploads data twice, first time via off-chain aggregation and second time via on-chain aggregation.
For instance, a dishonest reviewer can first submit a \textit{VO} to client at step 4.1 and later submit the same \textit{VO} at step 4.3, even if the \textit{VO} could be correctly verified through the on-chain Merkle root.
Without taking care of this, the vote may be counted twice, which violates the `per vote per reviewer' rule.
It is hard for contract $S_{\text{ag}}$ to detect the double-vote misbehavior because the contract has no knowledge of the off-chain chunks.
Even if $S_{\text{ag}}$ knows all chunks, it would be quite expensive to verify the double-vote misbehavior on the chain.
Therefore, we design a $\textit{doubleVotes}()$ function that could be called by either client $C$ or any reviewer $R$ to detect such a misbehavior off the chain and report it with a \textit{proof} to $S_{\text{ag}}$ during epoch-1 of the audit phase that follows the review phase so that the misbehavior could be efficiently verified by $S_{\text{ag}}$ on the chain.
We show the pseudo-code of $\textit{doubleVotes}()$ at Algorithm~\ref{A1}.
The function first verifies the timestamp of the transaction is within epoch-1 of the audit phase (line 1).
Then, it computes the hash of the input chunk (i.e., \textit{chunk}) that contains the repeated \textit{VO}, retrieves the Merkle root uploaded by client $C$ at step 4.2 and verifies the input Merkle proof (i.e., \textit{proof}) (line 2-4).
After that, function $splitChunk()$ would retrieve the repeated \textit{VO} from the chunk (i.e., $i^{th}$ \textit{VO} in the chunk) and decompose that \textit{VO} to $vrs_R$ and $addr(D)$ (line 5), from which the contract gets the address of $R$ who voted $D$ via off-chain aggregation (line 6).
Finally, if the contract finds that $R$ has also voted at step 4.3, it will mark $R$ as dishonest and record the address of the reporter and remove the votes cast by $R$ from the poll.
Later, the reporter could withdraw an award confiscated from \textXi $deposit$ paid by the violator.

\begin{algorithm}[t]
    \footnotesize
    \SetKwInOut{Input}{Input}
    \SetKwInOut{Output}{Output}

    \Input{$C, cn, proof, chunk, i.$}
    $verifyTimestamp(now\ is\ in\ audit.e1)$\;
    $hash \gets keccak256(chunk)$\;
    $root \gets retrieveRoot(C,cn)$\;
    \If{$verifyMerkleProof(proof,root,hash) == TRUE$} 
    {
      $(v, r, s, D) \gets splitChunk(chunk, i)$\;
      $R \gets vf(keccak256(D), v, r, s)$\;
    }

    \caption{The doubleVotes() function}
    \label{A1} 
\end{algorithm}


\subsection{TYPE $1 \times n$ cost-cutting strategy}

Our strategy of cutting the cost of TYPE $1 \times n$ steps consists of two components, illustrated as step 5 and step 7 in Fig.~\ref{protocol_detail_2}, respectively.

\noindent \textbf{\textit{Off-chain computation}}:
By default, the NF-Crowd protocol encourages computations to be performed off the chain.
Therefore at step 5 in Fig.~\ref{protocol_detail_2}, client $C$ could compute the winners off the chain after combining \textit{VOs} received at step 4.2 and \textit{VOs} uploaded at step 4.3 (if any) and simply upload the addresses of winning designers onto the chain via function $\textit{offChainVerdict()}$.
If no one challenges the results in a certain period of time, the contract would take the results as the final verdict.
In this way, the expensive on-chain computation could be offloaded to the off-chain side and the cost of uploading computation results is usually a small constant value.

\noindent \textbf{\textit{Countermeasure against incorrect off-chain computation}}:
It is important that the offloaded computation could be reloaded onto the blockchain so that off-chain computation results could be replaced with trustworthy on-chain computation results at any moment.
By always backing up off-chain computation with on-chain computation, incorrect off-chain computation results would never be adopted in the final verdict.
The procedure of reloading off-chain computation consists of two steps. 
First, as presented in Theorem~1, a TYPE $1 \times n$ step is associated with at least one TYPE $n \times 1$ step, so that any data offloaded via the TYPE $n \times 1$ cost-cutting strategy presented in the Section~4.2 needs to be reloaded onto the chain.
After that, computations could be performed on the reloaded data just as in the strawman protocol.
For instance, at step 7 in Fig.~\ref{protocol_detail_2}, a reviewer $R$ decides to challenge $winners$ uploaded by client $C$ at step 5.
To do this, the reporter should first reload all $chunks_{\textit{VO}}$ onto the chain via function $\textit{reloadChunkVO}()$.
We show the pseudo-code of $\textit{reloadChunkVO}()$ at Algorithm~\ref{A2}.
The function first verifies the timestamp (line 1) and Merkle proof (line 2-4).
Then, the function runs a \textit{for} loop to traverse \textit{VOs} included in the input chunk (line 5-9), recover the address of reviewer signing each \textit{VO} (line 6-7) and reload all the votes inside the chunk onto the chain (line 8).
Then, the computation could be re-done on the chain via function $\textit{onchainVerdict}()$.
If the results of $\textit{onchainVerdict}()$ are different from the ones in \textit{offchainVerdict()}, the final verdict would take the results of $\textit{onchainVerdict}()$ and a part of \textXi $deposit$ paid by client $C$ would be transferred to the reporter as an award.
Otherwise, the final verdict would still take the results of $\textit{offchainVerdict()}$.

\begin{algorithm}[t]
    \footnotesize
    \SetKwInOut{Input}{Input}
    \SetKwInOut{Output}{Output}

    \Input{$C, cn, proof, chunk.$}
    $verifyTimestamp(now\ is\ in\ audit.e2)$\;
    $hash \gets keccak256(chunk)$\;
    $root \gets retrieveRoot(C,cn)$\;
    \If{$verifyMerkleProof(proof,root,hash) == TRUE$} 
    {
      \For{$i=0;i<size(chunk);i++$}
      {
        $(v, r, s, D) \gets splitChunk(chunk, i)$\;
        $R \gets vf(keccak256(D), v, r, s)$\;
        $reload(R,D)$\;
      }
    }

    \caption{The reloadChunkVO() function}
    \label{A2} 
\end{algorithm}

\subsection{Cost and Security analysis}

We analyze the cost and security of the proposed NF-Crowd CC-OCR protocol as follows:

\begin{lemma}
\label{lemma1}
The NF-Crowd CC-OCR protocol has a total cost in the range of $[O(1),O(n)]$ and the $O(1)$ lower bound could be reached as long as the participants are rational.
\end{lemma}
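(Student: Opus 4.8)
The plan is to prove the claim in two parts: first pin down the two extremes of the cost range by exhibiting the on-chain operations that occur in the best and worst executions, and then give a game-theoretic argument that rational participants drive the protocol into the best (off-chain) execution, so that the $O(1)$ lower bound is actually attained.

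For the \emph{bounds} I would first enumerate the on-chain operations of the protocol in Fig.~\ref{protocol_detail_2}. In the default off-chain execution the only transactions that touch the chain are $newContest()$ (step 1), $rootEO()$ (step 2.2), $\textit{rootVO}()$ (step 4.2) and $\textit{offChainVerdict}()$ (step 5). Each carries a payload of constant size --- a fixed list of deadlines and an IPFS link, two 32-byte Merkle roots, and a constant number of winner addresses respectively --- and therefore costs $O(1)$ independent of the crowd size $n$. Summing a constant number of $O(1)$ transactions yields the $O(1)$ lower bound. For the upper bound I would exhibit the worst-case execution in which every countermeasure fires: each of the $n$ designers (reviewers) re-uploads its object on-chain via $submitEO()$ ($\textit{submitVO}()$) at step 2.3 (4.3), a TYPE $n \times 1$ step costing $O(n)$, and a challenge at step 7 triggers $\textit{reloadChunkVO}()$ followed by $\textit{onchainVerdict}()$, which by Theorem~1 reloads the entire aggregated vote set and recomputes the verdict on-chain, a TYPE $1 \times n$ step also costing $O(n)$. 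Hence the total never exceeds $O(n)$, establishing the range $[O(1),O(n)]$.

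For \emph{attainment of the lower bound} I would model the execution as a one-shot game in which each participant's utility is its received reward minus the transaction fees and penalties it pays, and \emph{rational} means each participant maximizes this utility. The key is to show that the honest off-chain strategy profile is an equilibrium, i.e.\ that triggering any countermeasure is weakly dominated. I would argue countermeasure by countermeasure: (i) a client who intentionally excludes a participant gains nothing but must share the re-upload fee at step 2.3 (4.3) and so prefers to include all objects; (ii) a participant who double-votes is detected by $\textit{doubleVotes}()$ (Algorithm~\ref{A1}), has its vote removed and its security deposit confiscated, a loss that by construction of the deposit exceeds any gain, so it does not double-vote; (iii) a client who posts an incorrect $\textit{offChainVerdict}()$ is refuted by $\textit{reloadChunkVO}()$ and $\textit{onchainVerdict}()$ (Algorithm~\ref{A2}) and forfeits part of its deposit, so it posts the correct verdict. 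Since no rational participant has an incentive to invoke any of the dashed countermeasure transactions, the realized execution is exactly the default off-chain one, whose cost we already bounded by $O(1)$.

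The main obstacle I anticipate is making the incentive argument airtight rather than merely plausible. Two subtleties must be handled. First, the deposits must be calibrated so that for \emph{every} form of cheating the confiscated amount strictly outweighs the maximum possible gain, uniformly in $n$; this requires stating the exact penalty and reward schedule of each function and checking the inequality. Second, each countermeasure is only a credible deterrent if some honest party will actually invoke it, yet invoking it costs gas; I must therefore verify that the reporter's reward (drawn from the violator's deposit) covers the reporter's transaction fee, so that reporting is itself individually rational and the threat is not empty. Establishing these two inequalities simultaneously --- deterrence for the potential cheater and individual rationality for the reporter --- is the crux of showing that the off-chain profile is a genuine equilibrium and hence that the $O(1)$ bound is reached.
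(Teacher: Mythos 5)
Your proposal is correct and follows essentially the same route as the paper's proof: it establishes the $[O(1),O(n)]$ range by identifying which executions invoke only the constant-cost functions ($newContest()$, $rootEO()$, $\textit{rootVO}()$, $\textit{offChainVerdict}()$) versus all the countermeasure functions ($submitEO()$, $\textit{submitVO}()$, $\textit{doubleVotes}()$, $\textit{reloadChunkVO}()$, $\textit{onchainVerdict}()$), and then argues that rational participants never trigger the countermeasures because any violation is detected and punished by deposit confiscation, yielding a negative payoff. Your game-theoretic framing and the two calibration subtleties you flag (deposit size versus maximum cheating gain, and reporter rewards covering reporting gas) are more careful than the paper's treatment, which simply asserts the negative payoff and cites the rational-adversary literature, but the underlying argument is the same.
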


\begin{proof}
The total cost would reach the upper bound $c_{\textit{ocr}}^{\textit{up}}$ when 
(1) all $\textit{EOs}$ are uploaded via $submitEO()$ onto the blockchain and 
(2) all $\textit{VOs}$ are uploaded via $\textit{submitVO}()$ or $\textit{reloadChunkVO}()$ onto the blockchain and 
(3) function $\textit{doubleVotes}()$ is called for each \textit{VO} and
(4) function \textit{onchainVerdict()} is invoked, namely,
$$O(c_{\textit{ocr}}^{\textit{up}}) \to O(c_{\textit{eo}}\cdot p_{\textit{d}}\cdot n+(c_{\textit{vo}}+c_{\textit{dv}})\cdot p_{\textit{r}} \cdot n+c_{\textit{ov}}+c_{\textit{rest}}) \to O(n),$$
where $c_{\textit{eo}}$ and $c_{\textit{vo}}$ denotes cost of uploading per \textit{EO} and \textit{VO}, $p_{\textit{d}}$ and $p_{\textit{r}}$ denotes percentage of designers and reviewers in the crowd that engage in the contest, $c_{\textit{dv}}$ and $c_{\textit{ov}}$ express cost of $\textit{doubleVotes}()$ and \textit{onchainVerdict()} and finally $c_{\textit{rest}}$ represents the total cost of calling other functions inside $S_{\text{ag}}$.
In contrast, the total cost would reach the lower bound $c_{\textit{ocr}}^{\textit{low}}$ when none of $\textit{submitEO}()$,$\textit{submitVO}()$, $\textit{reloadChunkVO}()$, $\textit{doubleVotes}()$ and \textit{onchainVerdict()} have been invoked, namely
$O(c_{\textit{ocr}}^{\textit{low}}) \to O(c_{\textit{rest}}) \to O(1)$.
All the five functions are countermeasures against dishonest participants by fixing problems made by them and confiscating \textXi $deposit$ paid by them, so the violators would gain no positive benefit but only a negative payoff.
Considering that rational adversaries choose to violate protocols only when doing so brings them a positive payoff~\cite{dong2017betrayal,groce2012fair,guo2016rational,nguyen2013analyzing}, rational participants would not choose to lose \textXi $deposit$, which in turn would push the total cost to reach its lower bound.
\end{proof}

We define the abortion of protocols to be the case that neither \textit{offchainVerdict()} nor \textit{onchainVerdict()} has been executed by the end of audit phase.
We define the correction of results to be the case that the effect of all the three identified protocol violations (\textit{intentional exclusion}, \textit{double vote} and \textit{incorrect off-chain computation}) to the results has been eliminated by the end of audit phase. 

\begin{lemma}
\label{lemma2}
The NF-Crowd CC-OCR protocol cannot be aborted and the results are guaranteed to be correct as long as at least one reviewer is honest.
\end{lemma}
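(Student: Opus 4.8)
The plan is to prove the two claims---non-abortion and correctness---separately, in both cases using the single honest reviewer (call it $R^{*}$) as a watchdog who can always fall back on the on-chain mechanisms of $S_{\text{ag}}$. Throughout I would invoke the standing assumptions: blockchain \textit{liveness}, \textit{consistency} and \textit{immutability} guarantee that any transaction $R^{*}$ broadcasts is eventually and irreversibly included, while the synchrony assumption guarantees it is included before the relevant epoch deadline. I would also use the tamper-resistance and traceability properties of the on-chain Merkle root established in Section~4.2, which let $R^{*}$ verify whether its own \textit{VO} was faithfully aggregated and reconstruct, from any published chunk, the address of the reviewer who signed each vote.

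For non-abortion I would argue by cases on the client $C$. If $C$ follows the protocol it calls \textit{offChainVerdict()} at step 5, so a verdict is recorded. If $C$ is dishonest and withholds \textit{offChainVerdict()}, then $R^{*}$ invokes \textit{onchainVerdict()} during epoch-2 of the audit phase (step 7). The key point to establish is that $R^{*}$ always possesses enough data to drive this call: its own vote is in its possession and, if $C$ excluded it, was already placed on-chain via \textit{submitVO()} at step 4.3; any chunks that $C$ did publish are available off-chain and can be reloaded via \textit{reloadChunkVO()}. Hence some verdict function executes by the end of the audit phase, contradicting the definition of abortion.

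For correctness I would show that each of the three violations has been neutralised by the end of the audit phase, treating them in the order the protocol defends against them. \emph{Intentional exclusion}: by the tamper-resistance of the root, an excluded participant fails Merkle verification and, being honest, re-uploads via \textit{submitEO()}/\textit{submitVO()} before the entry/review deadline, so in particular $R^{*}$'s vote is guaranteed on-chain. \emph{Double vote}: any duplicate (off-chain plus on-chain) is detected off the chain and reported through \textit{doubleVotes()} during audit epoch-1, where traceability lets $S_{\text{ag}}$ recover the signer and strike the repeated ballot; since $R^{*}$ may file this report, all duplicates are purged before any final tally. \emph{Incorrect off-chain computation}: if $C$'s step-5 result is wrong, $R^{*}$ challenges in audit epoch-2 by first reloading every published chunk with \textit{reloadChunkVO()}---exactly the TYPE $n \times 1$ reload that Theorem~1 says must precede a TYPE $1 \times n$ recomputation---and then invoking \textit{onchainVerdict()}; because the on-chain tally is deterministic and now runs over the cleaned vote multiset, it reproduces the true winners and, when it disagrees with \textit{offChainVerdict()}, overrides it.

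I expect the main obstacle to be the correctness argument for \textit{onchainVerdict()}: I must show that, after reloading and the epoch-1 duplicate purge, the on-chain vote multiset is exactly the legitimate one---no honest ballot missing and no ballot double-counted---against a fully adversarial $C$ and arbitrarily many dishonest reviewers. This requires carefully composing the four countermeasures (self-protection via \textit{submitVO()}, duplicate removal via \textit{doubleVotes()}, reloading via \textit{reloadChunkVO()}, and recomputation via \textit{onchainVerdict()}) and leaning on the Merkle integrity and traceability guarantees together with the epoch ordering (e1 before e2), so that duplicates are removed before the decisive on-chain tally is taken.
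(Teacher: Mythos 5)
Your proposal is correct and follows essentially the same route as the paper's own proof: the lone honest reviewer acts as a watchdog, non-abortion follows because she can always force a verdict via \textit{reloadChunkVO()} and \textit{onchainVerdict()} when the client withholds \textit{offChainVerdict()}, and correctness follows by neutralizing each of the three violations with \textit{submitVO()}, \textit{doubleVotes()}, and \textit{reloadChunkVO()} plus \textit{onchainVerdict()} respectively. If anything, your write-up is more explicit than the paper's (case analysis on the client, appeal to the liveness/synchrony assumptions, and the epoch-ordering argument ensuring duplicates are purged before the on-chain tally), including flagging the vote-multiset composition step that the paper asserts without detailed justification.
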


\begin{proof}
We have assumed that at least one reviewer is honest in Seection~\ref{assumptions}. Therefore, in the worst case, the client and all but one reviewers are dishonest. 
We first prove the never-abort property. 
During the review phase, 
the client may refuse to call \textit{offchainVerdict()}, resulting in \textit{incorrect off-chain computation}.
As a countermeasure, the honest reviewer could always reload all votes via \textit{reloadChunkVO()} and pick the winners via \textit{onchainVerdict()}.
Therefore, with the existence of a single honest reviewer, it is guaranteed that either \textit{offChainVerdict()} or \textit{onchainVerdict()} will be called by the end of audit phase.
Next, we prove the always-correct property. As long as there is a single honest reviewer, this reviewer always reserves the ability to include her vote into the pool via \textit{submitVO()} (\textit{intentional exclusion}), to eliminate all illegal on-chain votes via \textit{doubleVotes()} (\textit{double vote}), to reload all legal off-chain votes onto the chain via \textit{reloadChunkVO()} and finally to pick the winners from the pool formed by all legal votes via \textit{onchainVerdict()} (\textit{incorrect off-chain computation}). Therefore, the effect of all the three identified protocol violations to the results can be eliminated before end of audit phase.
\end{proof}

\section{Implementation and evaluation}
\label{s5}
In this section, we present the evaluation of the two protocols: 
strawman CC-OCR and NF-Crowd CC-OCR.
We programmed a smart contract $S_{\text{agency}}$ for each of the protocols using \textit{Solidity} and evaluated the protocols over the Ethereum official test network \textit{kovan}~\cite{kovan}.
Similar to recent work on blockchain-based platforms~\cite{das2018yoda,dziembowski2019perun}, the key focus of our evaluation is on measuring gas consumption as the execution complexity and monetary cost in Ethereum are measured via gas consumption.
In addition, we compare the cost of the proposed protocols with that of two existing solutions.

\subsection{Gas consumption of proposed protocols}

In Table~\ref{t2}, we list the key functions in the programmed contracts that interact with protocol participants during different phases and the cost of these functions in both Gas and USD.
For ease of expression, a function marked $\ast$, $\bullet$ or $^{\ast}_{\bullet}$ represents the function is included in strawman CC-OCR, NF-Crowd CC-OCR or both the two protocols, respectively.
The cost in USD is computed through 
$cost(\textit{USD})$=$cost(Gas)*\textit{GasToEther}*\textit{EtherToUSD}$,
where $\textit{GasToEther}$ and $\textit{EtherToUSD}$ are taken as their mean value during the first half of the year 2019 recorded in \textit{Etherscan}~\cite{etherscan}, 
which are $1.67*10^{-8}$ Ether/Gas and 175 USD/Ether, respectively.
We next evaluate the cost of the two protocols:

\noindent \textbf{\textit{Strawman CC-OCR protocol}}: 
The cost in Strawman CC-OCR includes the following components:
(1) a client to set up a new contest via $newContest()$ (\$0.53) during \textit{CC-OCR.initial};
(2) each interested designer to submit an entry via $submitEO()$ (\$0.42) during \textit{CC-OCR.entry};
(3) each reviewer to cast a vote via $\textit{submitVO}()$ (\$0.18) during \textit{CC-OCR.review} 
and 
(4) client to pick winners via $verdict()$.
It thus costs about $\$(0.53+0.42n_d+0.18n_r+(0.11+0.006n_d)) = \$(0.64+0.426n_d+0.18n_r)$ for completing a contest that involves $n_d$ designers and $n_r$ reviewers, where $\$(0.11+0.006n_d)$ is the average cost of picking winners among $n_d$ designers in $verdict()$.

\begin{table}
\caption{Key functions and their cost in Gas and USD. A function marked $\ast$, $\bullet$ or $^{\ast}_{\bullet}$ represents the function is included in strawman CC-OCR, NF-Crowd CC-OCR or both the two protocols, respectively.}      
\begin{center}
\begin{tabular}{|p{6mm}|p{18mm}|p{21mm}|p{12mm}|p{10mm}|}
\hline
    \textbf{Phase} & \textbf{Function} & \textbf{Description} & \textbf{Gas} & \textbf{USD} \\ \hline
    \multirow{1}{*}{\textbf{initial}}
    & $^{\ast}_{\bullet}\textit{newContest}$ & creat a contest  & \makecell{182909/ \\ 244434}   & \makecell{\$0.53/ \\ \$0.71} \\
    \hline
    \multirow{2}{*}{\textbf{entry}}
    & $\bullet\textit{rootEO}$ & submit EO root  & 45322 & \$0.13 \\
    & $^{\ast}_{\bullet}\textit{submitEO}$ & submit one EO  & 143978 & \$0.42 \\ 
    \hline
    \multirow{4}{*}{\textbf{review}}
    & $\bullet\textit{rootVO}$ & submit VO root & 65956 & \$0.19 \\
    & $^{\ast}_{\bullet}\textit{submitVO}$ & submit one VO  & 62267 & \$0.18 \\ 
    & $\bullet\textit{offChainVerdict}$ & submit results of off-chain verdict  & 44967  & \$0.13 \\
    & $\ast\textit{verdict}$ & directly execute on-chain verdict & \makecell{37227+ \\ 2171$n_d$}  & \makecell{\$0.11+ \\ 0.006$n_d$} \\
    \hline
    \multirow{3}{*}{\textbf{audit}}
    & $\bullet\textit{doubleVotes}$ & report double vote & 65844 & \$0.19 \\
    & $\bullet\textit{reloadChunkVO}$ & reload VO chunks to blockchain & \makecell{37843+ \\ 36578$n_{vo}$} & \makecell{\$0.11+ \\ 0.11$n_{vo}$} \\ 
    & $\bullet\textit{onchainVerdict}$ & redo verdict with smart contract & \makecell{46324+ \\ 2171$n_d$} & \makecell{\$0.14+ \\ 0.006$n_d$} \\
    \hline  
\end{tabular}
\end{center}
\label{t2}
\vspace{-5mm}
\end{table}



\noindent \textbf{\textit{NF-Crowd CC-OCR protocol}}: 
The lower bound of the cost in NF-Crowd CC-OCR also consists of four parts:
(1) a client to set up a new contest via $newContest()$ (\$0.71) during \textit{CC-OCR.initial}, which is more expensive because of the additionally transferred \textXi $deposit$;
(2) the client to upload the Merkle root of $EOs$ via $rootEO()$ (\$0.13) during \textit{CC-OCR.entry};
(3) the client to upload the Merkle root of $VOs$ via $rootVO()$ (\$0.19) during \textit{CC-OCR.review} and finally (4) the client to upload winners via \textit{offChainVerdict\;()} (\$0.13).
The lower bound of the cost is then $\$(0.71+0.13+0.19+0.13) = \$1.16$.
If any misbehavior occurs, the countermeasure functions (i.e., $submitEO()$, $\textit{submitVO}()$, $doubleVotes()$, $\textit{reloadChunkVO}()$, $\textit{onchainVerdict}()$) can be invoked and the cost for calling them will be mainly deducted from \textXi $deposit$ paid by protocol violators.
It is worth noting that the cost of function $\textit{reloadChunkVO}()$, $\$(0.11+0.11n_{vo})$, increases along with the number of $VOs$ carried by a $chunk_{\textit{VO}}$ (i.e., $n_{vo}$).
Grouping $\textit{VOs}$ into $chunks_{\textit{VO}}$ could effectively reduce the cost of reloading $\textit{VOs}$.
For instance, when $n_{\textit{vo}}=1$, the cost of reloading 100 $\textit{VOs}$ would be $\$100*(0.11+0.11*1)=\$22$, which is almost twice of $\$(0.11+0.11*100)=\$11.11$ with $n_{\textit{vo}}=100$.


\begin{figure}
\centering
{
   
    \includegraphics[width=0.78\columnwidth]{./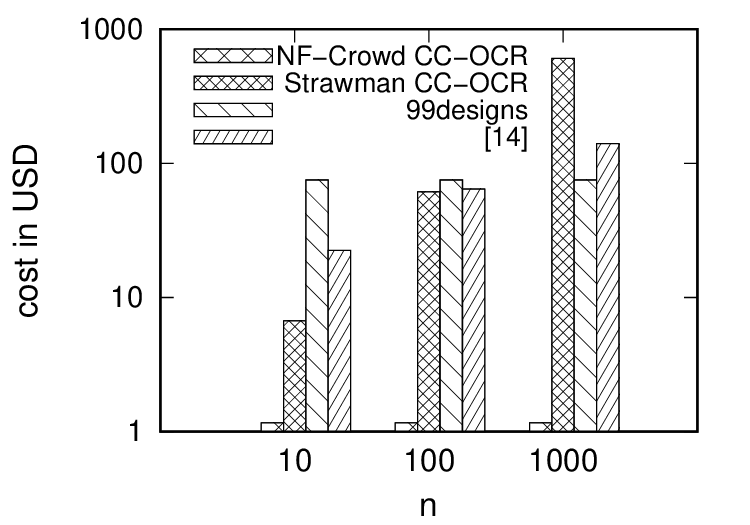}
}
\caption {Cost of the proposed protocols and existing solutions}
\vspace{-3mm}
\label{cost} 
\end{figure}

\subsection{Comparison between NF-Crowd and existing solutions}

Next, we compare the cost of strawman CC-OCR and NF-Crowd CC-OCR with both the platform fee charged by centralized \texttt{99designs} platform and the cost of a recent decentralized protocol in~\cite{duan2019aggregating}.
We assume that the design contest has a \$500 reward.
Also, for the purpose of evaluating the scalability of the solutions, we changed the scale of the crowd $n$ and displayed the results in Fig.~\ref{cost}.
As can be seen, when $n$ is increased from 10 to 1000, the cost of strawman CC-OCR linearly increases from $\$6.7$ to over $\$600$.
In contrast, the cost of NF-Crowd CC-OCR stays at $\$1.16$ constantly regardless of the scale of the crowd.
The cost of \texttt{99designs} stays at \$75, namely 15\% of \$500 reward.
In fact, from $\$8$ reward onwards it is more economical to use NF-Crowd CC-OCR than \texttt{99designs}.
The cost of the decentralized protocol in~\cite{duan2019aggregating} increases from $\$22.4$ to $\$140.5$ when the crowd gets scaled up.
Therefore, we see that NF-Crowd protocols minimize the cost by decoupling the amount of service fees from both the scale of the crowd and the amount of reward.



\section{Related work}
\label{s6}
\subsection{Centralized crowdsourcing}
Crowdsourcing has been emerging as a successful business model and has driven the rise of centralized crowdsourcing platforms such as \texttt{Upwork}~\cite{Upwork}, Amazon Mechanical Turk~\cite{AMT}, \texttt{99designs}~\cite{99designs} and \texttt{designContest}~\cite{designContest}.
Via these platforms, clients publish human intelligence tasks (HIT) that are challenging for computers but easy for human to complete with rewards and interested workers (or freelancers or designers) accomplish the tasks to earn rewards.
As the primary revenue stream, service fees are charged from the rewards by most of these platforms.
\texttt{Upwork}~\cite{Upwork} charges workers a sliding service fee based on the lifetime billings with a specific client, which consists of 20\% of the first \$500, 10\% of the billings between \$500 and \$10000 and 5\% of the billings that exceed \$10000.
For instance, Amazon Mechanical Turk (AMT)~\cite{AMT} charges clients 20\% fee on the reward and bonus amount (if any) clients pay workers.
Currently, Clients and designers are used to accepting the high service fees as they need a relatively trustworthy intermediary to exclude dishonest behaviors~\cite{zhang2012reputation}.
The \texttt{NF-Crowd} protocols proposed in this paper minimize the fees for purchasing trust in crowdsourcing. On the other hand, it makes the off-chain workload heavier. In other words, the \texttt{NF-Crowd} protocols transform the mandatory charge of monetary fees into non-monetary off-chain workload, offering a new option to participants in crowdsourcing.

\subsection{Decentralized crowdsourcing}
Recent advancements in blockchain technologies~\cite{nakamoto2008bitcoin} and smart contract platforms like Ethereum~\cite{wood2014ethereum} are driving the rise of decentralized crowdsourcing systems~\cite{calado2018tamper,duan2019aggregating,feng2019mcs,li2018crowdbc,lu2018zebralancer,wang2018blockchain,wu2019bptm,xu2019blockchain}.
In contrast to centralized crowdsourcing platforms, 
decentralized crowdsourcing systems leverage the distributed miners to decentralize both data storage and computation in a tamper-resilient manner. Thus they eliminate the need for a trusted third party.
However, existing designs of decentralized crowdsourcing systems can hardly handle transaction fees in a scalable manner, resulting in total costs of decentralizing crowdsourcing even higher than service fees charged by centralized crowdsourcing platforms.
For instance, decentralized CrowdBC~\cite{li2018crowdbc} charges 0.011 ether (i.e., \$1.93 by taking the average gas and ether prices in first half of year 2019) to tag 100 images while the same task only spends about \$0.45 in centralized AMT~\cite{AMT}.
In~\cite{duan2019aggregating}, aggregating data from 1,000 data providers costs \$140.
In~\cite{wu2019bptm}, adding each task to the task matching smart contract cost about 210,000 gas (i.e., \$0.61), namely \$610 for adding 1,000 tasks.  
The \texttt{NF-Crowd} protocols proposed in this paper reduce the cost of running decentralized projects on top of Ethereum to a small constant value regardless of the scale of the crowd, which for the first time demonstrates a significant economic advantage in decentralizing crowdsourcing.
In addition, we consider \texttt{NF-Crowd} a generic feature that is orthogonal to other design goals of decentralized crowdsourcing systems, allowing existing protocols to also become \texttt{NF-Crowd} by simply identifying TYPE $n \times 1$ and TYPE $1 \times n$ steps and reducing their cost with our strategies.

\vspace{-1mm}
\subsection{Scaling blockchain with off-chain execution}
Off-chain execution of smart contracts is a promising solution for improving blockchain scalability~\cite{cheng2019ekiden,das2019fastkitten,dziembowski2018general}.
However, recent works in this line have to either assume one honest manager for off-chain execution~\cite{cheng2019ekiden} or allow the execution to get aborted when the manager is dishonest~\cite{das2019fastkitten}.
The state channel network (SCN)~\cite{dziembowski2018general} could achieve the never abort property when at least one participant is honest but it only supports two-participant contracts.
The \texttt{NF-Crowd} protocols proposed in this paper extend the objective to support complex multi-participant multi-round smart contracts without losing the never abort property.
In addition, the \texttt{NF-Crowd} protocols for CC-OCR projects have been implemented in Ethereum, so they are ready-to-use.



\vspace{-1mm}
\subsection{Using cryptocurrency as security deposits}
There have been many recent efforts on blockchain-based protocol design that leverage cryptocurrency as security deposits to penalize unexpected behaviors and improve security~\cite{andrychowicz2014secure,dong2017betrayal,kiayias2015traitor,matsumoto2017ikp}. 
In~\cite{dong2017betrayal}, ether is used as security deposits to provide verifiable cloud computing.
In~\cite{matsumoto2017ikp}, ether is used as security deposits
to enforce certificate authorities to be honest.
Inspired by these previous efforts, \texttt{NF-Crowd} demands each participant to lock ether in smart contracts as security deposits to penalize potential misbehaviors violating the protocol and thereby enforces participants to stay honest.

\section{Conclusion}
\label{s7}
This paper proposes a new suite of protocols called \texttt{NF-Crowd} that 
reliably resolves the scalability issues faced by decentralized crowdsourcing projects. The proposed approach 
reduces the lower bound of the total cost to $O(1)$.
We prove that as long as participants of a project powered by \texttt{NF-Crowd} are rational, the $O(1)$ lower bound of the cost could be reached regardless of the scale of the crowd.
We also demonstrate that as long as at least one participant of a project powered by \texttt{NF-Crowd} is honest, the project cannot be aborted and the results are guaranteed to be correct.
We design \texttt{NF-Crowd} protocols for a representative type of project named crowdsourcing contest with open community review (CC-OCR).
We implement the protocols over the Ethereum official test network.
Our results demonstrate that \texttt{NF-Crowd} protocols can reduce the cost of running a CC-OCR project to less than \$2 regardless of the scale of the crowd, providing a significant cost benefit in adopting decentralized crowdsourcing solutions.


\section*{Acknowledgement}
We thank our shepherd, Alysson Bessani and the anonymous reviewers for their feedback and comments. 
Chao Li acknowledges the partial support by Fundamental Research Funds for the Central Universities (No. 2019RC038).

\renewcommand\refname{Reference}

\bibliographystyle{plain}
\urlstyle{same}

\bibliography{main.bib}

\vskip 0pt plus -1fil

\begin{IEEEbiography} [{\includegraphics[width=1in,height=1.25in]{./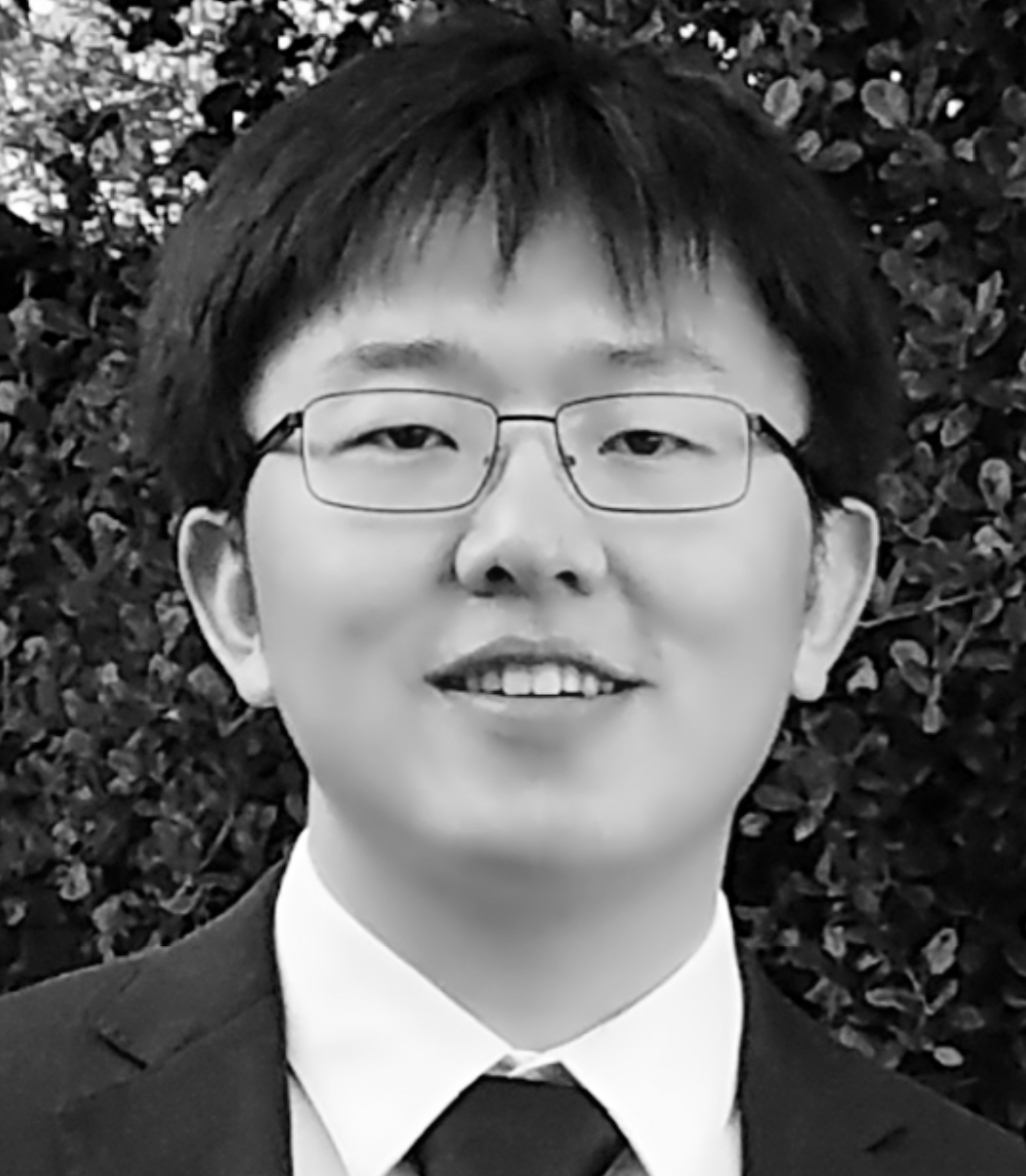}}]{Chao Li}
    is an Assistant Professor in the School of Computing and Information Technology at Beijing Jiaotong University. He received his Ph.D. degree from the School of Computing and Information at University of Pittsburgh and his MSc degree from Imperial College London. His current research interests are focused on Blockchain and Data Privacy. 
\end{IEEEbiography}

\vskip 0pt plus -1fil

\begin{IEEEbiography}
    [{\includegraphics[width=1in,height=1.25in]{./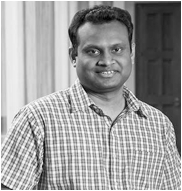}}]{Balaji Palanisamy}
is an Associate Professor in the School of computing and information in University of Pittsburgh. He received his M.S and Ph.D. degrees in Computer Science from the college of Computing at Georgia Tech in 2009 and 2013, respectively. His primary research interests lie in scalable and privacy-conscious resource management for large-scale Distributed and Mobile Systems. At University of Pittsburgh, he codirects research in the Laboratory of Research and Education on Security Assured Information Systems (LERSAIS).
\end{IEEEbiography}

\vskip 0pt plus -1fil

\begin{IEEEbiography} [{\includegraphics[width=0.9in,height=1.25in]{./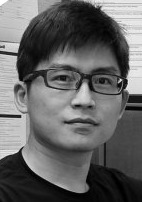}}]{Runhua Xu}
     is a PhD student of School of Computing and Information at
the University of Pittsburgh. He received his M.S. in Computer Science
and B.E. in Software Engineering degrees from Beihang University and
Northwestern Polytechnical University, China, respectively. His research
interests include Access Control, Applied Cryptography, Data Security
and Privacy.
\end{IEEEbiography}

\vskip 0pt plus -1fil

\begin{IEEEbiography} [{\includegraphics[width=1in,height=1.25in]{./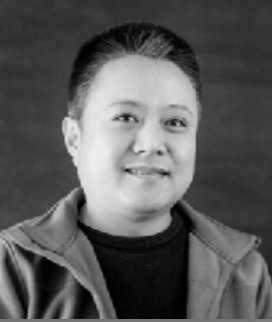}}]{Jian Wang}
    is an Associate Professor in the School of Computing and Information Technology at Beijing Jiaotong University. He received his Ph.D. degree from Beijing University of Posts and Telecommunications in 2008. His current research interests are focused on Network Security and Software Security.
\end{IEEEbiography}

\vskip 0pt plus -1fil

\begin{IEEEbiography} [{\includegraphics[width=1in,height=1.25in]{./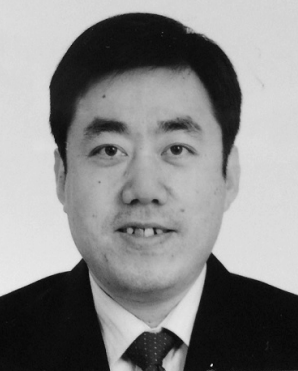}}]{Jiqiang Liu}
    received the B.S. and Ph.D. degrees
from Beijing Normal University, Beijing, China, in
1994 and 1999, respectively. He is currently a Professor with the School of Computer and Information
Technology, Beijing Jiaotong University, Beijing. He
has authored over 60 scientific papers in various
journals and international conferences. His main
research interests are trusted computing, cryptographic protocols, privacy preserving, and network
security.
\end{IEEEbiography}

\end{document}